\documentclass[12pt]{article}
\usepackage[utf8]{inputenc}
\usepackage{tikz}
\usepackage{graphicx}
\usepackage{url}
\usepackage{amsmath}
\usepackage{epstopdf}
\usepackage{amsthm}
\usepackage{amssymb}
\usepackage{mathrsfs}
\usepackage{natbib}
\usepackage[hmargin=2.3cm,vmargin=2.3cm,headsep=0.4cm]{geometry}
\usepackage[onehalfspacing]{setspace}
\usepackage{color}
\definecolor{darkblue}{rgb}{0,0,.5}
\usepackage[english]{babel}
\usepackage{booktabs}
\usepackage{array}
\usepackage{paralist}
\usepackage{subcaption}
\usepackage[format=hang,font=small,labelfont=bf]{caption}
\usepackage{fancyhdr}
\usepackage{sectsty}
\sectionfont{\Large}
\subsectionfont{\large}
\subsubsectionfont{\large}
\usepackage[colorlinks=true,linkcolor=black,urlcolor=darkblue,citecolor=darkblue]{hyperref}

\newtheorem{proposition}{Proposition}

\makeatletter
\renewenvironment{proof}[1][\proofname]
{\par\pushQED{\qed}\normalfont\topsep6\p@\@plus6\p@\relax\trivlist\item[\hskip\labelsep\bfseries#1\@addpunct{.}]\ignorespaces}
{\popQED\endtrivlist\@endpefalse}
\makeatother

\title{\textbf{The Optimal Reset-Hour of a Once-Daily Petrol Price Increase Limit}}

\author{Christoph Siemroth\footnote{Wivenhoe Park, Colchester, CO4 3SQ, UK. E-mail: christoph.siemroth@essex.ac.uk.} \\
Department of Economics \\
University of Essex}

\date{18 March 2026}

\begin{document}
\maketitle

\begin{abstract}
\noindent A German ministry recently proposed a limit of at most one price increase per day for petrol stations. At what time should the price reset be allowed in order to lower price levels the most throughout the day? To answer this question, I infer the share of price-sensitive consumers for every hour of the day from German petrol station price data, based on a simple spatial-competition model. I focus on weekdays, which are the relevant target because commuter demand is less flexible than weekend demand. Hourly petrol station prices peak at 07:00 and bottom out at 19:00. Given the inferred composition of price-sensitivity throughout the day and hourly passenger-car traffic frequencies as a proxy for quantity, I evaluate every possible reset-hour of the new policy. The lowest traffic-weighted average price is achieved by an 11:00 reset. With this reset-hour, the resulting equilibrium price throughout the day is constant. This would lead to lower prices in the morning but higher prices in the evening, harming price-sensitive consumers but benefiting morning commuters and firms. 

\vspace{0.3cm}
\noindent\textbf{Keywords}: retail gasoline, dynamic pricing, price discrimination, regulation, spatial competition

\vspace{0.1cm}
\noindent\textbf{JEL Classification}: L13, L41, L81
\end{abstract}
\newpage

\section{Introduction}
In March 2026, the Iran war pushed crude oil prices and petrol-station prices up throughout the world. Consumers in many countries called for mitigation in the form of price controls or petrol tax decreases. Going a slightly different route, on 11 March 2026, the German Federal Ministry of Economic Affairs proposed that petrol stations would only be allowed to increase prices once per day \citep{Handelsblatt2026}, whereas the number of price decreases would remain unlimited. No constraints on the magnitude of price changes would be imposed. Austria has been using this kind of regulation since 2011 \citep{Obradovits2014}.

A possible reason why petrol station prices currently change so much and often is that stations price-discriminate across consumers at different times of the day. In the morning, commuters and firms are more likely to be forced to buy petrol for work, so they are relatively price-insensitive. In the evening, there is little work-related traffic, so consumers are more price sensitive. This pattern explains why petrol station prices tend to be highest in the morning and lowest in the evening \citep{Bundeskartellamt2021}. Recent ADAC evidence based on more than 14,000 stations shows that this pattern remained very pronounced in 2025, with an average Super E10 spread of 12.5 cents per liter between the most expensive and cheapest moments of the day \citep{ADACDaily2025}. A January 2026 ADAC release still reports an evening-morning spread of roughly 13 cents, so the pattern does not appear to have vanished in more recent data \citep{ADACJanuary2026}.

Given this setting, the timing of the price reset in the new proposed policy matters for the overall price level. Stations will still want to have high prices in the morning to extract surplus from commuters and firms, while wanting to compete more fiercely in the evening on price in order to attract price-sensitive consumers. Hence, if the price reset were allowed only at the time when prices peak without restrictions, it might change price levels relatively little. On the other hand, allowing a price increase only in the early afternoon---when prices were decreasing absent constraints---would change the pricing among petrol stations considerably. For the regulator, this induces an optimization problem to impose constraints under which petrol station equilibrium behavior yields the lowest price level.

In order to determine which price reset-hour is optimal in reducing the price level throughout the day, I use a simple spatial-competition model to infer the share of price-sensitive consumers $\lambda_t$ for every hour $t$ of the day from observed hourly prices during weekdays. Then, given the inferred sequence of the price-sensitivity share, I use the spatial-equilibrium price paths---constrained by the fact that prices after the reset cannot rise for 24 hours---to determine which reset-hour would lead to the lowest prices across the day. Because petrol sale quantity data are not publicly available, I weight the different hours of the day by a proxy, the weekday passenger-car traffic frequencies from BASt (German Federal Highway and Transport Research Institute) federal-road counting stations. Under that traffic-weighted objective, the optimal reset-hour of the new policy is 11:00. This optimal reset-hour is close to the reset-hour used in the Austrian regulation, which allowed price increases only at noon.

The optimal reset-hour at 11:00 induces an equilibrium price path that is constant, i.e., prices do not change during the day. This is because for the reset at 11:00, stations cannot lower prices in the evening to compete for price-sensitive consumers without being locked into a low price in the morning, hurting their profits from price-insensitive commuters. Thus, the optimal price response to an 11:00 reset is a flat price path. Other reset-hours induce non-flat price paths, i.e., they induce intertemporal price discrimination. 

A theoretical result shows that, in this model, any reset-hour that induces a flat equilibrium price path minimizes the weighted average price, given stations' equilibrium best responses. The intuition is straightforward. Because marginal cost is constant throughout the day, firms' objective is aligned with raising the weighted average price, whereas the regulator’s objective is to lower it. The regulator therefore prefers exactly those reset-hours that firms dislike most.

A flat equilibrium price path means that, under the reset rule, stations cannot profit from intertemporal price discrimination. They are unable to improve on a single common price by charging relatively high prices in hours with less price-sensitive demand and lower prices in hours with more price-sensitive demand. By contrast, whenever the reset-hour-induced equilibrium path is non-flat, stations are still able to segment the day and exploit differences in demand composition across hours. That same flexibility raises the weighted average price. Hence, within the model, reset-hours that induce flat equilibrium prices deliver the lowest attainable average price level. This means that a secondary regulator objective---reducing price volatility---is perfectly consistent with lower price levels in this model framework.

One limitation of this analysis is that no hourly quantity data are readily available. While high frequency price data for every minute of every day are available, we do not know exactly how many units are sold at different hours. But this information naturally matters for welfare considerations and counterfactual pricing scenarios. The hourly passenger-car traffic frequencies as a transparent proxy to weight the different hours of the day are better than no weighting, but they are not perfect. Petrol is storable, and someone who always commutes in the morning may never buy in the morning. An additional limitation is that we do not observe how ``purchase-time-flexible'' consumers are. Even if we had perfect quantity data, these would be based on current daily price paths. If the new policy induces considerably different price paths, then consumers might switch to buying petrol at different hours of the day given these new price paths. Since we do not observe this counterfactual, this time-of-purchase substitution is hard to quantify and model. 

This article is meant as a policy contribution on the specific question of when to set the ``reset-hour'' of the new proposed policy. It takes no stance on whether this policy is sensible to begin with. Several articles in the literature have investigated theoretically and empirically whether the ``one reset per day'' rule increases or decreases prices, based on Austrian data where the policy has been in effect for a while. \citet{Obradovits2014} develops a model of Austrian-style gasoline price regulation and shows that restricting the timing of price adjustments can, in principle, backfire by softening competition and raising prices. \cite{DewenterHeimeshoffLueth2017} provide reduced-form empirical evidence on regulatory interventions in retail gasoline markets and conclude that the Austrian regime lowered fuel prices, in contrast to the weaker effects they document for Western Australia. \cite{BeckerPfeiferSchweikert2021} use a synthetic-control design and find that Austria’s Fuel Price Fixing Act led to a substantial short-run decline in retail fuel prices relative to the counterfactual. \citet{FasoulaSchweikert2020} studies Austrian evidence on regulated price-adjustment dynamics, finding partial evidence that the new regulation passes through input price changes faster. More recently, \cite{BernhardtBreiderhoffDewenter2025} revisit transparency regulation in European fuel markets and suggest that, in Austria, the initial price-reducing effect of the 2011 reform may have been partly offset over time by the effects of increased market transparency.

Beyond evaluations of this specific policy, a large industrial-organization literature shows that price discrimination under competition may either intensify or soften competition, depending on how segmentation changes firms' incentives \citep{ThisseVives1988,Holmes1989,ArmstrongVickers2001,Stole2007}. In gasoline markets, search frictions, switching costs, and dynamic pricing are known to matter \citep{Borenstein1991,BorensteinShepard1996,Lewis2011,Noel2007}. German station-level evidence also shows substantial and persistent heterogeneity in retail prices across outlets and locations \citep{HaucapHeimeshoffSiekmann2017}.

\section{Model}
\subsection{Environment}
There are two petrol stations, indexed by $i \in \{A,B\}$, with constant marginal cost $c>0$. Time is discrete and measured in hours, $t=0,1,\ldots,23$. In each hour, a unit mass of consumers arrives. A share $\lambda_t \in (0,1]$ is price-sensitive, and the remaining share $1-\lambda_t$ is price-insensitive. 

The price-sensitive consumers are distributed uniformly on a Hotelling line $[0,1]$. Station $A$ is located at $0$ and station $B$ at $1$. A price-sensitive consumer at location $x$ obtains utility
\[
u_A=\bar u-p_A-\tau x,\qquad u_B=\bar u-p_B-\tau(1-x),
\]
where $\tau>0$ measures travel or distance cost. The market is fully covered. Standard Hotelling algebra implies that the price-sensitive demand for station $i$ in hour $t$ is
\[
q_{it}^S(p_{it},p_{jt})=\lambda_t\left(\frac{1}{2}+\frac{p_{jt}-p_{it}}{2\tau}\right).
\]

The price-insensitive consumers are local. Each station has a local mass $(1-\lambda_t)/2$ in hour $t$. If the station charges $p_{it}\leq v$, all of them buy there; if $p_{it}>v$, they do not buy. Thus
\[
q_{it}^I(p_{it})=
\begin{cases}
\frac{1-\lambda_t}{2}, & p_{it}\leq v,\\
0, & p_{it}>v.
\end{cases}
\]
The parameter $v$ is large enough that the equilibrium prices in the calibration stay in the interior region, hence I do not consider corner solutions further. 

To capture the fact that some hours are quantitatively more important than others, let
\[
w_t>0,
\qquad
\sum_{t=0}^{23} w_t = 1,
\]
denote hour-specific demand weights in clock time. In the unrestricted hourly problem, this weight is a multiplicative scale factor in profits and therefore does not affect Proposition \ref{prop1}. It matters once one common price has to cover several hours, as in Propositions \ref{prop2} and \ref{prop3}.

\subsection{Unrestricted hourly pricing}
Under unrestricted pricing, stations can choose an independent price in every hour.

\begin{proposition}\label{prop1}
If the reservation constraint does not bind, the symmetric equilibrium price in hour $t$ is
\[
p_t^F=c+\frac{\tau}{\lambda_t}.
\]
\end{proposition}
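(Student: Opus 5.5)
Under unrestricted pricing the hours are strategically independent: station $i$'s profit is additively separable across $t$, with hour-$t$ component $w_t(p_{it}-c)\bigl[q_{it}^S(p_{it},p_{jt})+q_{it}^I(p_{it})\bigr]$, and nothing links one hour's price to another's. So I will solve a static two-firm pricing game for each fixed $t$. The weight $w_t>0$ only scales profits and drops out of the first-order condition, as the environment section already notes.

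Fix $t$ and look at the region $p_{it}\le v$, which is the relevant one because the reservation constraint is assumed not to bind. There the local demand is the constant $(1-\lambda_t)/2$. Station $i$'s hour-$t$ profit, up to the factor $w_t$, is
\[
\pi_{it}=(p_{it}-c)\left[\lambda_t\left(\frac12+\frac{p_{jt}-p_{it}}{2\tau}\right)+\frac{1-\lambda_t}{2}\right]
=(p_{it}-c)\left[\frac12+\lambda_t\frac{p_{jt}-p_{it}}{2\tau}\right].
\]
This is strictly concave in $p_{it}$, with second derivative $-\lambda_t/\tau<0$. The first-order condition is
\[
\frac12+\frac{\lambda_t(p_{jt}-p_{it})}{2\tau}-\frac{\lambda_t(p_{it}-c)}{2\tau}=0,
\]
which gives the best response $p_{it}=\tfrac12\bigl(p_{jt}+c+\tau/\lambda_t\bigr)$. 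Imposing symmetry $p_{it}=p_{jt}=p$ yields $p=c+\tau/\lambda_t$. The best responses are linear with slope $\tfrac12$, so this fixed point is the unique equilibrium of the interior game. This reproduces the standard Hotelling markup $\tau$ when $\lambda_t=1$.

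The remaining step is to check that the interior stationary point is a global best response, not just a local one.

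\begin{itemize}
\item \emph{Raising the price above $v$.} Doing so loses the captive segment and cannot help, given the maintained assumption that $v$ is large enough that $p_t^F\le v$ and that the jump is unprofitable.
\item \emph{Leaving the linear Hotelling segment.} Deviations large enough to corner the price-sensitive market must also be excluded. Against $p_{jt}=c+\tau/\lambda_t$, the best response computed above stays in the range where price-sensitive demand lies strictly between $0$ and $\lambda_t$: at the symmetric price each station serves $\lambda_t/2$. Concavity then makes it optimal within the closed region $p_{it}\le v$. Undercutting to capture all price-sensitive consumers lies in the kinked region, where profit is linear or concave and dominated.
\end{itemize}

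I expect the main, though minor, obstacle to be stating this global-optimality check cleanly. I would handle it by noting that profit on $[c,v]$ is a continuous piecewise-concave function whose only stationary point is $p_t^F$.
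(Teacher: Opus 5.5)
Your first two paragraphs (separability across hours, strict concavity with second derivative $-\lambda_t/\tau$, the first-order condition, symmetry) are exactly the paper's proof. The paper maximizes the linear-demand profit over $p_{it}\le v$ and stops there.

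The trouble is the global check you add after truncating Hotelling demand. You examine only the undercutting corner, which is indeed dominated, and you miss the opposite one. For $p_{it}\ge p_{jt}+\tau$ the station loses every price-sensitive consumer but keeps its captives, so its profit is $(p_{it}-c)(1-\lambda_t)/2$, which increases all the way up to $v$. At $p_{it}=p_{jt}+\tau$ the slope jumps \emph{up} by $\lambda_t(p_{jt}+\tau-c)/(2\tau)>0$. This is a convex kink, so profit on $[c,v]$ is not concave, and ``piecewise concave with a unique stationary point'' does not give a global maximum; the endpoint $v$ must be compared directly. Against $p_{jt}=c+\tau/\lambda_t$ the equilibrium profit is $\tau/(2\lambda_t)$. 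Setting $p_{it}=v$ earns $(v-c)(1-\lambda_t)/2$, which is strictly larger whenever $\lambda_t<1$ and $v-c>\tau/\bigl(\lambda_t(1-\lambda_t)\bigr)$. So with truncated demand ``$v$ large enough'' is not sufficient: for such $v$ no symmetric pure-strategy equilibrium exists (the familiar captive/shopper problem). Your first bullet, by contrast, is fine and needs no extra assumption. Above $v$, profit is either zero or the linear-demand profit minus the forgone captive revenue, hence below the interior maximum.

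To close the gap you have two options. One is to adopt the paper's reading explicitly: the linear demand formula is the demand for all $p_{it}\le v$, so the strictly concave objective is globally maximized at the first-order condition and no corner analysis arises. The other is to keep the truncation and add the two-sided condition $c+\tau/\lambda_t\le v\le c+\tau/\bigl(\lambda_t(1-\lambda_t)\bigr)$ for $\lambda_t<1$. The lower bound is the non-binding reservation constraint; the upper bound rules out the captive-only deviation. In the baseline calibration ($\tau=p_{\min}-c\approx 0.281$, smallest $\lambda_t\approx 0.742$) this window is roughly $1.88\le v\le 2.97$ euro per liter. It is non-empty, but it is an assumption that has to be stated.
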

\begin{proof}
See Appendix.
\end{proof}

The unrestricted price therefore rises when the share of price-sensitive consumers falls, as it lessens spatial-competition and prices are raised to extract more from the price-insensitive consumers. The observed intraday price cycle can be interpreted as a reduced-form path of $\lambda_t$.

\subsection{A single daily upward price reset}
Now suppose each station may increase its price only once per day, at a clock hour $r$. The number of price cuts remains unrestricted. Between two reset opportunities, the price path must therefore be weakly decreasing.

For the reset-rule analysis, it is convenient to introduce a new within-cycle time index
\[
k=0,1,\ldots,23,
\]
where $k=0$ is the reset-hour, $k=1$ is the next hour, and so on until $k=23$, the hour just before the next daily reset. Let $t_k(r)$ denote the corresponding clock hour. Thus $t_0(r)=r$, $t_1(r)$ is the next clock hour after $r$, etc.

Consider any consecutive block $K$ of ``within-cycle'' hours (determined by the reset-time) that must share one common price $p_K$. For example, if the two hours after the reset have the same price, then $k=0,1$ are part of the same block. Total profit in that block is
\[
\Pi_i^K=(p_i-c)\sum_{k\in K}w_{t_k(r)}\left[\lambda_{t_k(r)}\left(\frac{1}{2}+\frac{p_j-p_i}{2\tau}\right)+\frac{1-\lambda_{t_k(r)}}{2}\right]
=(p_i-c)\sum_{k\in K}w_{t_k(r)}\left[\frac{1}{2}+\lambda_{t_k(r)}\frac{p_j-p_i}{2\tau}\right].
\]

\begin{proposition}\label{prop2}
If the reservation constraint does not bind, and the price within a block has to be constant, then the symmetric equilibrium price for block $K$ is
\[
p_K^R=c+\frac{\tau}{\bar \lambda_K^w},
\qquad
\bar \lambda_K^w \equiv \frac{\sum_{k\in K}w_{t_k(r)}\lambda_{t_k(r)}}{\sum_{k\in K}w_{t_k(r)}}.
\]
\end{proposition}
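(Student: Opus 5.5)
The argument copies the proof of Proposition \ref{prop1}, treating the block $K$ as one market with weighted demand. Abbreviate $W_K\equiv\sum_{k\in K}w_{t_k(r)}>0$ and $\Lambda_K\equiv\sum_{k\in K}w_{t_k(r)}\lambda_{t_k(r)}$, so that $\bar\lambda_K^w=\Lambda_K/W_K\in(0,1]$. Since the reservation constraint does not bind (all prices stay at or below $v$), the block profit displayed above can be written as
\[
\Pi_i^K=(p_i-c)\left[\frac{W_K}{2}+\Lambda_K\frac{p_j-p_i}{2\tau}\right].
\]
This is a strictly concave quadratic in $p_i$ for fixed $p_j$, because $\Lambda_K>0$ (each $w_t>0$ and $\lambda_t>0$). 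A best response is therefore characterized by the first-order condition.

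Step one is to take that first-order condition:
\[
\frac{W_K}{2}+\Lambda_K\frac{p_j-p_i}{2\tau}-\frac{\Lambda_K}{2\tau}(p_i-c)=0 .
\]
Step two is to impose symmetry, $p_i=p_j=p_K$. The middle term vanishes, leaving $W_K=\Lambda_K(p_K-c)/\tau$, that is, $p_K=c+\tau W_K/\Lambda_K=c+\tau/\bar\lambda_K^w$. As a check, for a singleton block this reduces to Proposition \ref{prop1}.

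Step three is to confirm that this is an equilibrium and not merely a stationary point. The best response is $p_i=\tfrac12\left(p_j+c+\tau W_K/\Lambda_K\right)$, which has slope $1/2$. So the symmetric fixed point is the unique equilibrium of the block game, and by concavity neither station gains from a unilateral deviation within the interior region.

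The main obstacle is interpretive rather than computational, namely justifying that the block can be analyzed in isolation. I would handle it by taking the block structure as given, exactly as the statement does: "the price within a block has to be constant". Under that reading, a station's choice of $p_K$ affects only profits earned in hours $k\in K$. Hours outside the block belong to other blocks with their own prices, so total daily profit is additively separable across blocks. I would note that the weak-decrease constraint across blocks is not part of this claim and is left to later results such as Proposition \ref{prop3}.

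One remaining check is that Hotelling shares stay in $[0,1]$ for the price-sensitive consumers. Since $|p_j-p_i|\le\tau$ holds near the symmetric point, the concavity argument on the relevant interval suffices. I would state this together with the non-binding reservation assumption and not belabor it.
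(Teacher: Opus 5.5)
Your proof is correct and follows the paper's argument: you write the block profit with the aggregated weights, use strict concavity to reduce to the first-order condition, and impose symmetry to get $p_K=c+\tau W_K/\Lambda_K=c+\tau/\bar\lambda_K^w$. Your extra remarks on uniqueness of the symmetric equilibrium (best responses with slope $1/2$) and on separability across blocks are consistent additions that the paper leaves implicit.
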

\begin{proof}
See Appendix.
\end{proof}

The superscript ``R'' indicates restricted equilibrium prices, rather than flexible ``F''-prices as in Proposition \ref{prop1}. Now, under the new policy, where only one price increase is allowed at hour $r$, what is the new daily price path?

\begin{proposition}\label{prop3}
Fix a reset-hour $r$, and index within-cycle hours by $k=0,\dots,23$, where $k=0$ is the reset-hour. Let
\[
p_k^F(r)\equiv p_{t_k(r)}^F
\]
denote the unrestricted hourly equilibrium prices from Proposition \ref{prop1}. Under the once-per-day reset rule, the symmetric equilibrium path $\mathbf p^R(r)=\bigl(p_0^R(r),\dots,p_{23}^R(r)\bigr)$ is obtained as follows.

\medskip
\noindent
\textbf{Step 1.} Start from the 24 singleton blocks
\[
\{0\},\{1\},\dots,\{23\},
\]
with initial block prices
\[
p_0^F(r),p_1^F(r),\dots,p_{23}^F(r).
\]

\medskip
\noindent
\textbf{Step 2.} If the current block prices are weakly decreasing in within-cycle time, stop. Otherwise, whenever two adjacent blocks would imply an upward step in price, merge those two blocks into one larger consecutive block.

\medskip
\noindent
\textbf{Step 3.} For every merged block $K$, replace the prices of all hours in that block by the common block price
\[
p_K^R=c+\frac{\tau}{\bar\lambda_K^w},
\qquad
\bar\lambda_K^w\equiv
\frac{\sum_{k\in K} w_{t_k(r)}\lambda_{t_k(r)}}{\sum_{k\in K} w_{t_k(r)}}.
\]

\medskip
\noindent
\textbf{Step 4.} Repeat Steps 2 and 3 until no upward step remains. Then merge any adjacent blocks that have the same price. The unique endpoint is the symmetric equilibrium path $\mathbf p^R(r)$.

\medskip
\noindent
Thus, the equilibrium partition
\[
\mathcal K(r)=\{K_1(r),\dots,K_M(r)\}
\]
is the unique partition of the within-cycle hours into consecutive constant-price blocks produced by this iterative pooling procedure. For each block $K_m(r)$, let $p_m^R(r)$ denote its common equilibrium price, so that
\[
p_k^R(r)=p_m^R(r)
\qquad\text{for every } k\in K_m(r).
\]
Within each block the price is constant, while between any two adjacent blocks the equilibrium price drops strictly. Hence
\[
p_1^R(r)>p_2^R(r)>\cdots>p_M^R(r).
\]
\end{proposition}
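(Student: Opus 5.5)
The plan is to recognize the pooling procedure as the pool-adjacent-violators algorithm for weighted isotonic (antitonic) regression, and to show that the equilibrium of the constrained game coincides with its output.

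\textbf{Reformulation.} Write $\mu_k\equiv 1/p$-free quantities: for each hour let $a_k\equiv w_{t_k(r)}\lambda_{t_k(r)}$ and $b_k\equiv w_{t_k(r)}$. Under symmetry with common path $\mathbf p$, a deviation by station $i$ to a path $\mathbf p_i$ yields hour-$k$ profit $(p_{ik}-c)\,[b_k/2+a_k(p_k-p_{ik})/(2\tau)]$. The first-order derivative at $p_{ik}=p_k$ equals
\[
g_k(p_k)\equiv \tfrac12\bigl[b_k-a_k(p_k-c)/\tau\bigr].
\]
This is strictly decreasing in $p_k$ and vanishes exactly at $p_k^F(r)$ (Proposition \ref{prop1}).

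\textbf{Best-response problem.} Station $i$ maximizes a strictly concave, separable objective over the convex cone $\{p_{i0}\ge p_{i1}\ge\cdots\ge p_{i,23}\}$. The reset at $k=0$ permits any level, and afterwards only cuts are allowed. Concavity holds because each term is quadratic with coefficient $-a_k/(2\tau)<0$. Hence KKT conditions are necessary and sufficient. A symmetric profile $\mathbf p$ is an equilibrium iff $\mathbf p$ itself satisfies the KKT conditions of this problem with the rival fixed at $\mathbf p$. Those conditions read as follows: $\mathbf p$ is weakly decreasing; on each maximal constant block $K$, $\sum_{k\in K}g_k(p_K)=0$, which gives exactly $p_K=c+\tau/\bar\lambda_K^w$ as in Proposition \ref{prop2}; and within each block the partial sums $\sum_{k\in K,\,k\le m}g_k(p_K)$ are $\ge 0$ for every prefix. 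The sign convention comes from the multipliers on $p_{k}\ge p_{k+1}$ being nonnegative. This reduces the equilibrium characterization to the KKT system of the antitone regression $\min\sum_k a_k\,(p_k-p_k^F)^2$ subject to monotonicity. I will check that its gradient matches $g_k$ up to the positive factor $1/(2\tau)$: indeed $a_k(p_k^F-p_k)=b_k\tau-a_k(p_k-c)$ because $a_kp_k^F=a_kc+b_k\tau$. That regression is strictly convex, so its solution is unique. This gives uniqueness of the symmetric equilibrium.

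\textbf{PAVA argument.} It remains to show that Steps 1--4 output this unique solution. This is the classical correctness of pool-adjacent-violators with weights $a_k$ and targets $p_k^F$. The pooled block value $\sum a_kp_k^F/\sum a_k$ equals $c+\tau\sum b_k/\sum a_k=c+\tau/\bar\lambda^w_K$, so the algorithm's block prices are exactly Step 3's. I would prove correctness by induction on merges, with the invariant that every current block satisfies the prefix condition. A merge of two blocks with an upward step produces a pooled price between the two old ones. Each old block's prefix sums then remain nonnegative when evaluated at the new price: for the left block the price rises and $g$ falls, but its total becomes positive. This step needs care. The cleanest route is to show that the merged block's total is zero and that every prefix consisting of the left block plus part of the right block has a nonnegative sum. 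That follows because the right block's suffixes sum to $\le 0$ at the higher price.

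The main obstacle is exactly this invariant bookkeeping, and its independence from the order of merges, which gives the "unique endpoint". Order-independence follows from uniqueness of the KKT solution once the invariant is established.

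Finally, after the last merge of equal-priced neighbours, adjacent blocks have distinct prices. Monotonicity then forces strict drops, giving $p_1^R(r)>\cdots>p_M^R(r)$.
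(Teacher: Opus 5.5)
\textbf{Where your proposal matches the paper.} Your reduction is the paper's: strict concavity makes KKT necessary and sufficient for a best response, at a symmetric profile these conditions coincide (up to a positive factor) with the KKT system of the weighted antitonic regression of $p_k^F(r)$ with weights proportional to $w_{t_k(r)}\lambda_{t_k(r)}$, strict convexity gives uniqueness, and the pooled weighted mean equals $c+\tau/\bar\lambda_K^w$. The paper stops there and invokes the classical pool-adjacent-violators theorem. What you add is a self-contained proof of that theorem, and there the sign of your prefix condition is reversed. As a result, the merge step fails as you wrote it.

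\textbf{The sign error.} Let $\nu_k\ge 0$ be the multiplier on $p_{k-1}\ge p_k$, with $\nu_0=\nu_{24}=0$. Stationarity then reads $g_k=\nu_k-\nu_{k+1}$. At the first hour $s$ of a maximal block, either $s=0$ or $p_{s-1}>p_s$, so $\nu_s=0$. Hence the prefix sum through $m$ equals $-\nu_{m+1}\le 0$, not $\ge 0$. As a check, pooling two hours with $p_0^F<p_1^F$ at a price strictly between them gives $g_0<0$.

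Two of your claims therefore fail. Raising the left block from $p_L$ to the pooled price $p$ makes its total strictly negative, not positive. Lowering the right block from $p_R$ to $p$ raises its suffix sums, so they are not $\le 0$.

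\textbf{The corrected invariant.} Use: total zero and all prefix sums $\le 0$, equivalently all suffix sums $\ge 0$. With this the induction goes through:
\begin{itemize}
\item Prefixes inside the left block only fall when evaluated at $p>p_L$, so they stay $\le 0$.
\item A prefix consisting of the left block plus an initial segment of the right block equals minus a suffix sum of the right block evaluated at $p<p_R$. That suffix sum is at least its value at $p_R$, which is $\ge 0$, so the prefix is $\le 0$.
\item The final merging of equal-price neighbours preserves the invariant.
\item Set $\nu_{m+1}$ equal to minus the prefix sum through $m$ inside blocks, and zero at block boundaries. This verifies the KKT system, so uniqueness identifies the algorithm's output with the equilibrium.
\end{itemize}
Either fix the sign this way or simply cite the theorem, as the paper does.
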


\begin{proof}
See Appendix.
\end{proof}

The interesting thing about Proposition \ref{prop3} is that the equilibrium can be constructed iteratively. Start from the symmetric equilibrium price path over the day if there were no constraints (Proposition \ref{prop1}). If the hourly prices in that path respect monotonicity between resets, i.e., they do not increase except in the reset-hour, then we have the equilibrium price path under the new policy. But if the unconstrained price path does not respect monotonicity, take any two subsequent hours where the price increases (excluding the reset-hour where a price increase is allowed). Then merge these two hours into a block. The equilibrium price for that block---where prices have to be constant---is based on the demand-weighted average of the share of price-sensitive consumers in those two hours, using the corresponding hour weights $w_t$ (Proposition \ref{prop2}). Continue merging hours and blocks until the resulting block partition has a price path that respects monotonicity. Hence, the non-monotonicity between any two subsequent hours or blocks is removed. Proposition \ref{prop3} states that this price path found through iterative merging is a symmetric equilibrium.\footnote{The solution for the underlying problem comes from an older literature in statistics on isotonic regression problems that first showed that a quadratic optimization problem subject to a monotonicity constraint can be solved with the pool-adjacent violators algorithm (e.g., \cite{BarlowBartholomewBremnerBrunk1972}). This is the algorithm described in Proposition \ref{prop3}.}

Hence, Proposition \ref{prop2} answers the question ``What price arises in equilibrium if I already know which hours of the day must share one price?''. Proposition \ref{prop3} answers the question ``Which hours in equilibrium have to share one price once the reset-hour $r$ is fixed?''. 

Once Proposition \ref{prop3} delivers the constrained equilibrium price path $\{p_k^R(r)\}_{k=0}^{23}$ for reset-hour $r$, the regulator's objective is
\[
\bar p^w(r)=\sum_{k=0}^{23} w_{t_k(r)} p_k^R(r),
\]
that is, the price at every hour is weighted by the quantity purchased, with the caveats of imperfect quantity proxy and the assumption of no time-substitution under new price paths as discussed above.

\section{Traffic-Weighted Calibration}
\subsection{Weekday station-level E5 prices}
I use historical station-level fuel price-change files from Tankerkönig's data portal \citep{TankerkoenigData2026}. The raw files record time-stamped price changes rather than a complete panel of prices at every instant. For the price profile I keep the 15 weekdays from February 2 to February 20, 2026. The three preceding Sundays -- February 1, 8, and 15 -- are used only to initialize Monday midnight prices; without them we would be missing the station prices each Monday until the first price change that day. I choose this February window for two reasons. First, it is recent and close to the March 2026 German policy proposal. Second, it ends before late February when fuel prices were publicly described as unusually high amid Middle East war risk \citep{ADACWarRisk2026}. The sample is therefore a recent pre-shock window rather than a war-distorted one.

I use E5 rather than E10 because it is the more widely used petrol grade and therefore the more relevant benchmark for routine commuter refueling. I focus on weekdays because commuter traffic and business activity is concentrated there and purchase timing is less flexible than on weekends.\footnote{The exercise could be repeated for weekend days, and the optimal reset-hour on weekends may very well be different, given the vastly different commuting patterns.} Transforming the raw Tankerkönig files into hourly prices requires two steps. First, I treat each station's E5 price as piecewise constant between two successive recorded E5 changes. Second, for each station-hour I compute a time-weighted average price, so a price that lasts 45 minutes within the hour receives three times the weight of a price that lasts 15 minutes. To avoid thin Monday morning coverage, I carry each station's Sunday closing price into the following Monday until the first Monday change. Prices from weekdays are also carried over to the next weekday until there is a price update. I then pool these station-hour prices across all observed station-seconds in the same clock hour. With the Sunday carryover, the average observed station count per hour lies between about 14,041 and 14,178, so the resulting profile is close to balanced over the day.

Figure \ref{fig:profile} shows the pooled weekday profile. The pattern is intuitive and very similar to the well-known German intraday cycle: prices are relatively high overnight, jump sharply around 07:00, and then decline in several waves toward an evening trough. In the hourly aggregation, the highest price is 187.92 cents per liter at 07:00 and the lowest is 178.13 cents at 19:00. The 9.79-cent gap across hourly averages is smaller than the corresponding moment-to-moment spread because full-hour averaging smooths short-lived peaks and troughs.

\begin{figure}[!htbp]
\centering
\includegraphics[width=0.92\textwidth]{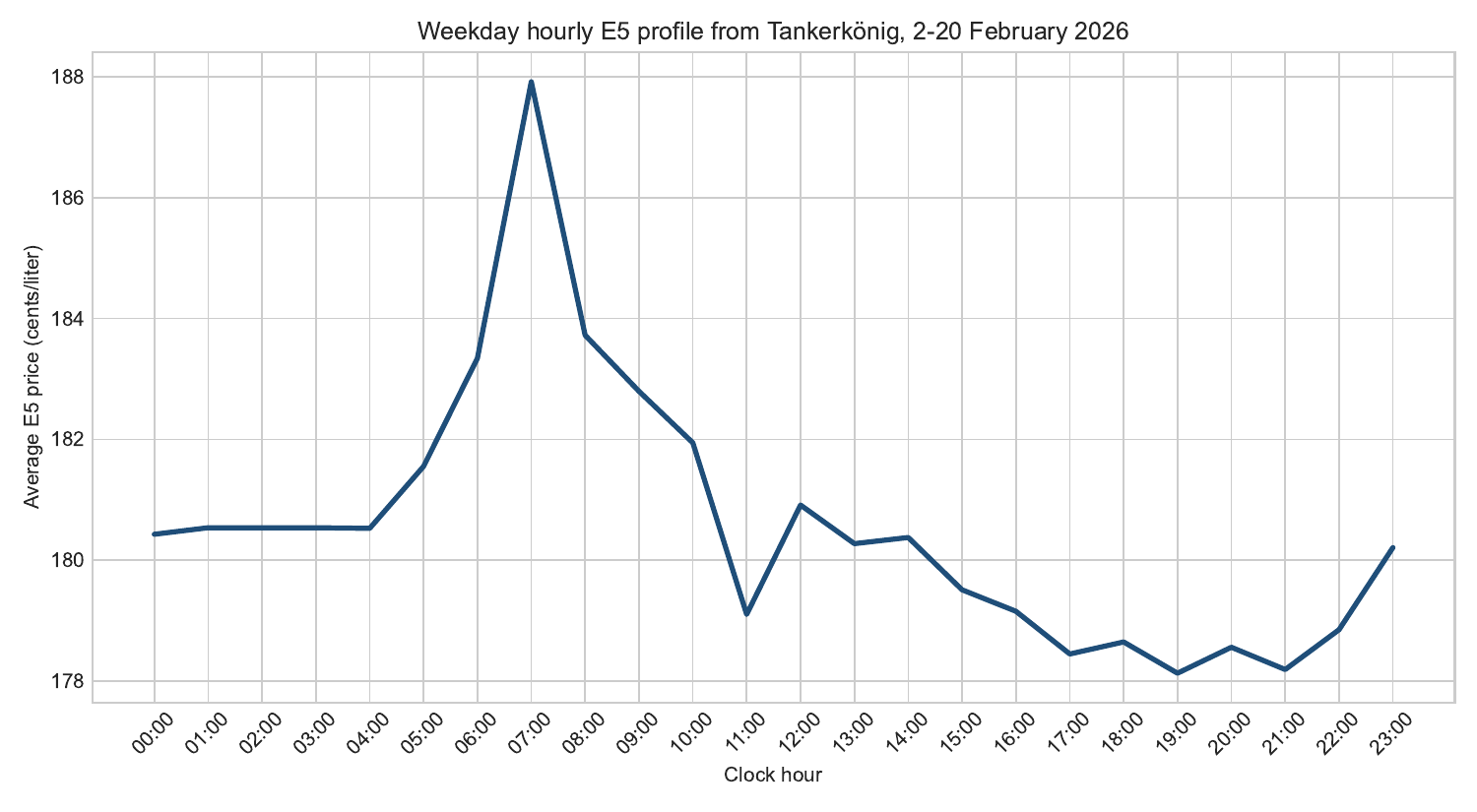}
\caption{Weekday intraday E5 profile from Tankerkönig}
\label{fig:profile}
\begin{minipage}{0.92\textwidth}
\footnotesize Notes: The figure plots time-weighted station-hour averages for E5 prices from Tankerkönig over the weekdays from February 2 to February 20, 2026. The preceding Sundays are used only to initialize Monday carryover states and are not included in the weekday profile.
\end{minipage}
\end{figure}

\subsection{Inferring the hourly share of price-sensitive consumers}
If the reservation constraint does not bind, Proposition \ref{prop1} implies
\[
\lambda_t=\frac{\tau}{p_t-c}.
\]
To map the price data into shares, I normalize the cheapest hour to $\lambda_t=1$. This implies
\[
\widehat{\lambda}_t=\frac{p_{\min}-c}{p_t-c}.
\]
The level of $c$ is a calibration choice rather than an estimate, so I use a baseline value of $c=1.50$ euro per liter and report sensitivity to different values of $c\in\{1.45,1.50,1.55\}$.\footnote{Because the calibration normalizes the cheapest hour to $\widehat{\lambda}_{\min}=1$, the model implies $\tau=p_{\min}-c$. Thus the sensitivity exercise over $c\in\{1.45,1.50,1.55\}$ also varies the implied Hotelling transport-cost parameter $\tau$ one-for-one. In other words, $\tau$ is not calibrated separately; it is pinned down mechanically by the normalization.}

Figure \ref{fig:lambda} plots the resulting hourly shares of price-sensitive consumers. In the baseline calibration, the inferred share of price-sensitive consumers is about 0.742 at 07:00, rises to roughly 0.967 by 11:00, and reaches one at 19:00. The sensitivity band is economically modest: the hour of lowest price sensitivity remains the morning peak, and the hour of greatest price sensitivity remains the evening trough for all three marginal cost values.

\begin{figure}[!htbp]
\centering
\includegraphics[width=0.92\textwidth]{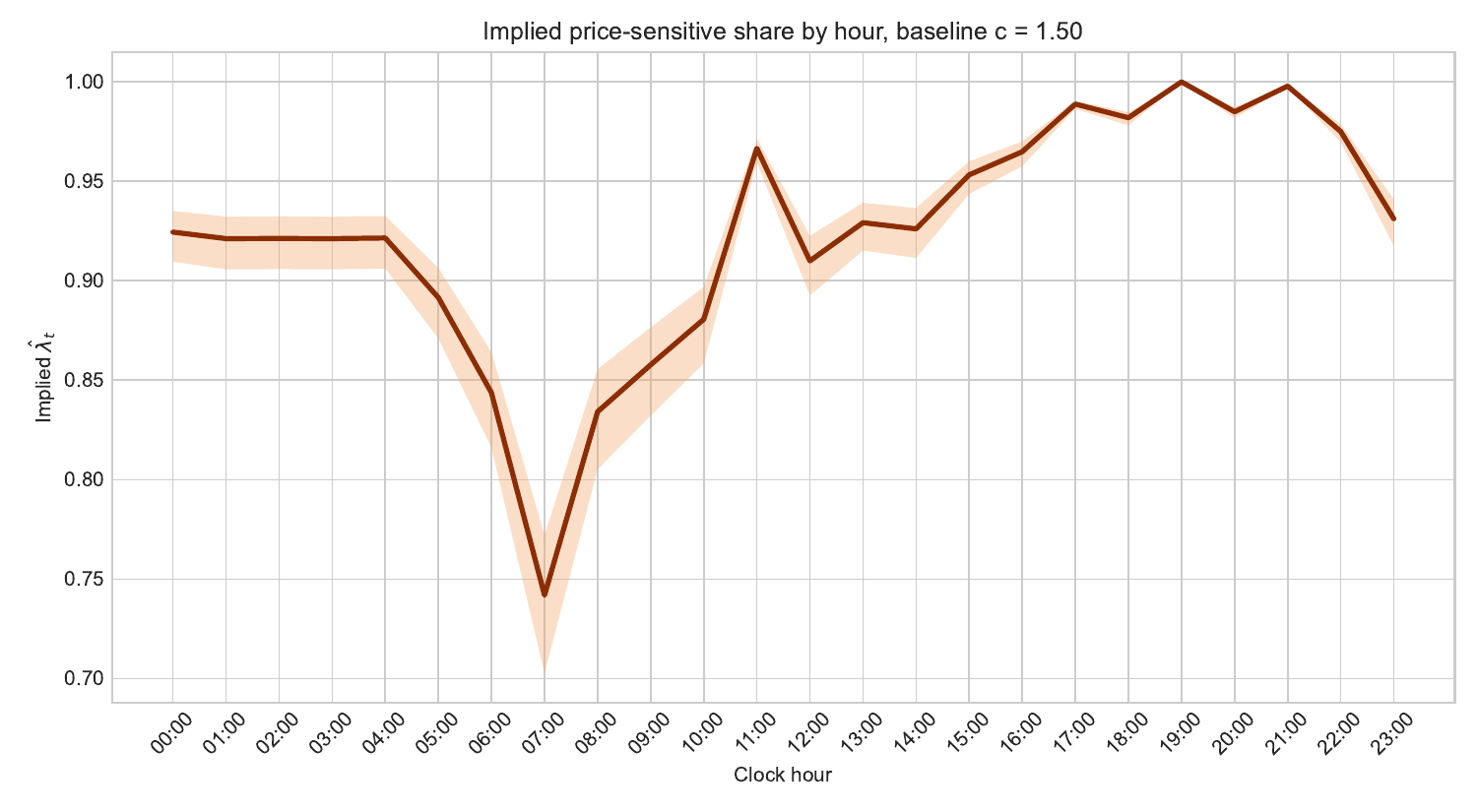}
\caption{Inferred share of price-sensitive consumers by hour}
\label{fig:lambda}
\begin{minipage}{0.92\textwidth}
\footnotesize Notes: The solid line uses the baseline cost calibration $c=1.50$ euro per liter. The shaded band shows the range obtained from $c=1.45$ and $c=1.55$. The underlying hourly profile is the pooled Tankerkönig weekday E5 price profile for February 2--20, 2026, and the cheapest hour is normalized to $\lambda_t=1$.
\end{minipage}
\end{figure}

\subsection{Traffic weights from BASt passenger-car counts}
Hourly fuel quantities are not publicly available, but the German BASt publishes hourly raw traffic counts from permanent counting stations on federal trunk roads \citep{BAStTraffic2026,BAStTrafficFormat2026}. I use the February 2025 Bundesstraßen archive because the 2026 data is not yet available, and it matches the month/season of the Tankerkönig price sample. Several design choices matter. First, I use Bundesstraßen (federal roads) rather than Autobahnen (motorways), because federal roads are closer to routine local and commuter refueling, whereas motorway counts include much more long-distance traffic and freight. Second, I use passenger cars (Pkw) rather than total vehicles, because E5 is a petrol grade and truck traffic is disproportionately diesel. Third, I keep weekdays only, to match the weekday focus of the price sample.

The BASt raw file has one row per counting station, day, hour, giving counts for different vehicle classes for both directions separately. To be conservative, I keep only counts where passenger-car measure is unflagged in the corresponding status column. Reassuringly, the share of unflagged measurements is within one percentage point across all hours, so there is no meaningful hour-specific bias due to missing data. Aggregating across the 870 Bundesstraßen counting stations yields hourly passenger-car totals $Q_t$, and I define the empirical weights as
\[
\widehat w_t=\frac{Q_t}{\sum_{s=0}^{23} Q_s}.
\]
These weights enter twice in the calibration. First, they determine the weighted block averages of $\lambda_t$ that govern firms' constrained equilibrium prices under Proposition \ref{prop3}. Second, they weight the regulator's average-price objective. The resulting weights are strongly concentrated in the commuter day. In the February weekday sample, 92.8\% of total weight lies between 06:00 and 21:00, 07:00 receives 7.53\% of total weight, the maximum occurs at 16:00 with 8.61\%, and 02:00 receives only 0.15\%.

\begin{figure}[!htbp]
\centering
\includegraphics[width=0.92\textwidth]{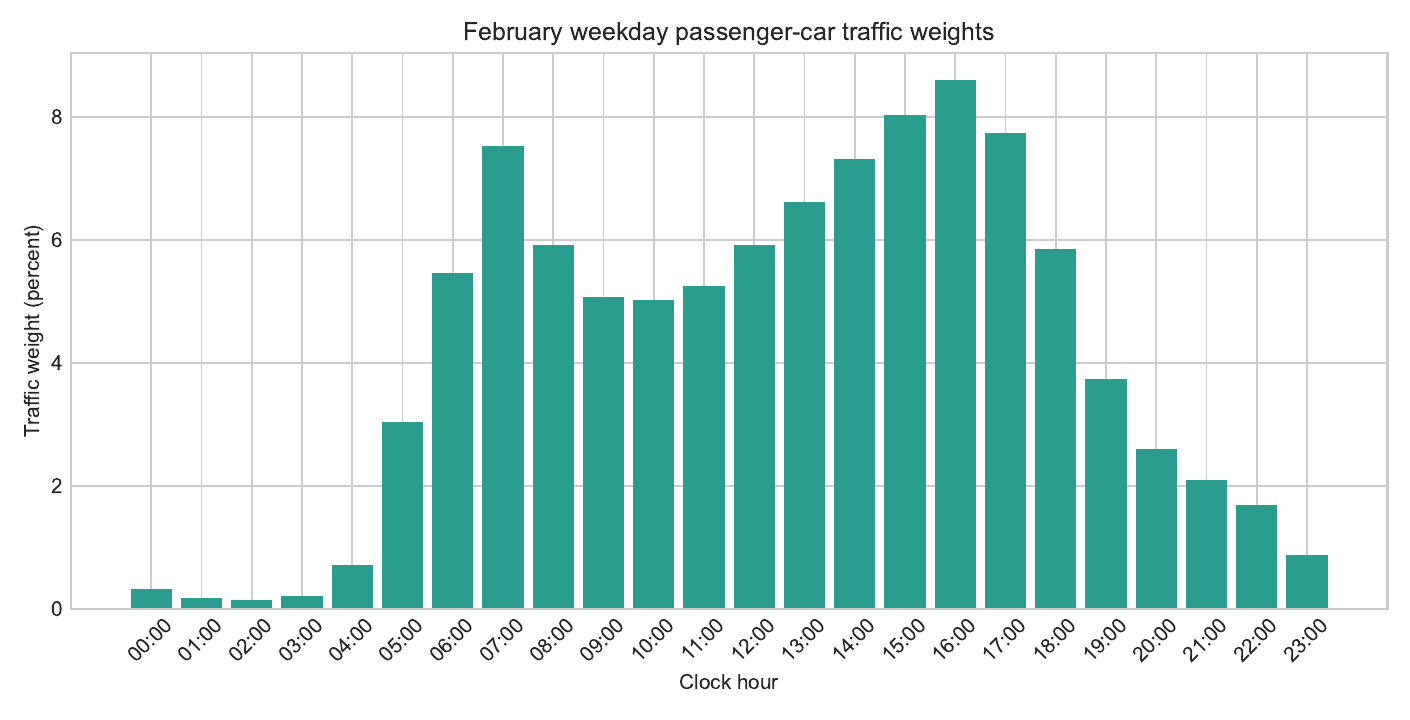}
\caption{Traffic weights by hour}
\label{fig:weights}
\begin{minipage}{0.92\textwidth}
\footnotesize Notes: The bars show $\widehat w_t$, the share of total February 2025 weekday passenger-car traffic that falls in clock hour $t$ on BASt Bundesstraßen counting stations. Only unflagged direction counts are used.
\end{minipage}
\end{figure}

\begin{table}[!htbp]
\centering
\caption{Hourly prices, inferred price sensitivity, and traffic weights}
\label{tab:hourlydata}
\small
\begin{tabular}{cccccccccccc}
\toprule
Hour & Price & $\lambda_t$ & Weight & Hour & Price & $\lambda_t$ & Weight & Hour & Price & $\lambda_t$ & Weight \\
\midrule
00:00 & 1.804 & 0.924 & 0.33 & 08:00 & 1.837 & 0.834 & 5.92 & 16:00 & 1.792 & 0.965 & 8.61 \\
01:00 & 1.805 & 0.921 & 0.19 & 09:00 & 1.828 & 0.858 & 5.08 & 17:00 & 1.784 & 0.989 & 7.74 \\
02:00 & 1.805 & 0.921 & 0.15 & 10:00 & 1.819 & 0.881 & 5.02 & 18:00 & 1.786 & 0.982 & 5.85 \\
03:00 & 1.805 & 0.921 & 0.21 & 11:00 & 1.791 & 0.967 & 5.26 & 19:00 & 1.781 & 1.000 & 3.73 \\
04:00 & 1.805 & 0.921 & 0.72 & 12:00 & 1.809 & 0.910 & 5.93 & 20:00 & 1.786 & 0.985 & 2.60 \\
05:00 & 1.816 & 0.892 & 3.04 & 13:00 & 1.803 & 0.929 & 6.63 & 21:00 & 1.782 & 0.998 & 2.10 \\
06:00 & 1.833 & 0.844 & 5.47 & 14:00 & 1.804 & 0.926 & 7.31 & 22:00 & 1.789 & 0.975 & 1.69 \\
07:00 & 1.879 & 0.742 & 7.53 & 15:00 & 1.795 & 0.953 & 8.03 & 23:00 & 1.802 & 0.931 & 0.88 \\
\bottomrule
\end{tabular}
\\ [2mm]
\begin{minipage}{0.94\textwidth}
\footnotesize Notes: Prices are time-weighted E5 station-hour averages from Tankerkönig over the weekdays from February 2 to February 20, 2026. The baseline share uses $c=1.50$ euro per liter and normalizes the cheapest hour to $\lambda_t=1$. The traffic weights are February 2025 weekday passenger-car shares from BASt Bundesstraßen counting stations, using only unflagged direction counts.
\end{minipage}
\end{table}

\section{When Should the Single Price Increase Be Allowed?}
I now evaluate every candidate reset-hour $r=0,\ldots,23$. For each $r$, I work in within-cycle time $k=0,\ldots,23$, where $k=0$ is the reset-hour itself and $k=23$ is the hour just before the next daily reset.

First, I form the sequence
\[
\widehat{\lambda}_0(r),\ \widehat{\lambda}_1(r),\ \ldots,\ \widehat{\lambda}_{23}(r),
\qquad
\widehat{\lambda}_k(r)\equiv \widehat{\lambda}_{t_k(r)}.
\]
This is just the observed day reordered so that the cycle begins at the candidate reset-hour. I also form the corresponding weight sequence
\[
\widehat{w}_0(r),\ \widehat{w}_1(r),\ \ldots,\ \widehat{w}_{23}(r),
\qquad
\widehat{w}_k(r)\equiv \widehat{w}_{t_k(r)}.
\]

Second, I impose the once-per-day reset constraint by requiring the price path to be weakly decreasing in the within-cycle time variable $k$. Equivalently, the demand-weighted block-average values of $\widehat{\lambda}_k(r)$ must be weakly increasing. I therefore pool adjacent violating hours until I obtain the unique partition from Proposition \ref{prop3}, using $\widehat{w}_k(r)$ as the observation weights. In practice, this is a weighted pool-adjacent-violators algorithm.

Third, for every resulting block $K_m(r)$ I assign the equilibrium block price
\[
p_m^R(r)=c+\frac{\tau}{\bar\lambda_m^w(r)},
\qquad
\bar\lambda_m^w(r)\equiv
\frac{\sum_{k\in K_m(r)}\widehat{w}_k(r)\widehat{\lambda}_k(r)}
{\sum_{k\in K_m(r)}\widehat{w}_k(r)},
\]
since the maintained calibration assumes that the reservation price $v$ of price-insensitive consumers is not hit in practice. This yields the full within-cycle equilibrium path
\[
\{p_0^R(r),p_1^R(r),\ldots,p_{23}^R(r)\}.
\]
Fourth, I compute the traffic-weighted objective
\[
\bar p^{\,w}(r)=\sum_{k=0}^{23} \widehat w_k(r) p_k^R(r).
\]
The optimal reset-hour is the minimizer of this weighted average price criterion.

Table \ref{tab:besthours} summarizes the best-performing hours. Under the baseline calibration of $c=1.5$, the best hour for the reset is 11:00. The unrestricted traffic-weighted average price is 180.874 cents per liter, while the best regulated traffic-weighted average is 180.674 cents, a reduction of 0.199 cents per liter. The ranking is unchanged across the cost grid $c\in\{1.45,1.50,1.55\}$. Note that the second to fifth ranked reset times all lead to an equilibrium price path with two blocks---i.e., a price drop during the day---whereas the best reset time yields a constant equilibrium price path during the day.

\begin{table}[!htbp]
\centering
\caption{Best candidate hours under traffic weighting}
\label{tab:besthours}
\small
\begin{tabular}{ccccc}
\toprule
Rank & Hour & Weighted avg. price (ct/l) & Saving (ct/l) & Blocks \\
\midrule
1 & 11:00 & 180.674 & 0.199 & 1 \\
2 & 13:00 & 180.676 & 0.197 & 2 \\
3 & 14:00 & 180.677 & 0.197 & 2 \\
4 & 15:00 & 180.677 & 0.197 & 2 \\
5 & 10:00 & 180.677 & 0.196 & 2 \\
\bottomrule
\end{tabular}
\\ [2mm]
\begin{minipage}{0.90\textwidth}
\footnotesize Notes: Candidate hours are evaluated in the baseline calibration $c=1.50$ euro per liter. The objective is the traffic-weighted average price $\sum_t w_t p_t^R(r)$, where the weights come from February 2025 weekday passenger-car counts on BASt Bundesstraßen counting stations. The constrained block prices also use these hour weights through traffic-weighted block averages of $\lambda_t$. Equal-hour averages are reported separately in the machine-readable output.
\end{minipage}
\end{table}

Why are prices flat at the 11:00 reset-hour? That flat equilibrium price is above the evening trough, but far below the 07:00 commuter peak. From the observed price path across the day---interpreted as unconstrained equilibrium prices---we know petrol stations want to lower prices in the evening to attract more price-sensitive consumers, while ramping up prices in the morning to extract profits from price-insensitive commuters. But with an 11:00 reset, petrol stations cannot at the same time decrease prices in the evening and then increase prices in the morning, because the price increase is only allowed mid-day after the morning rush. Thus, because of the concave profit objective, petrol stations prefer to set a constant intermediate price across the day, rather than a price close to the high morning optimum or close to the low evening optimum. 

Figure \ref{fig:paths} plots the new equilibrium price path under the new policy with the optimal reset-hour against the old equilibrium. Relative to unrestricted pricing, the best weighted rule replaces the intraday price cycle seen now with a completely flat price path set by petrol stations. The gain from the policy therefore comes from cutting the high-price commuter hours by more than it raises the low-price evening and night hours.

\begin{figure}[!htbp]
\centering
\includegraphics[width=0.92\textwidth]{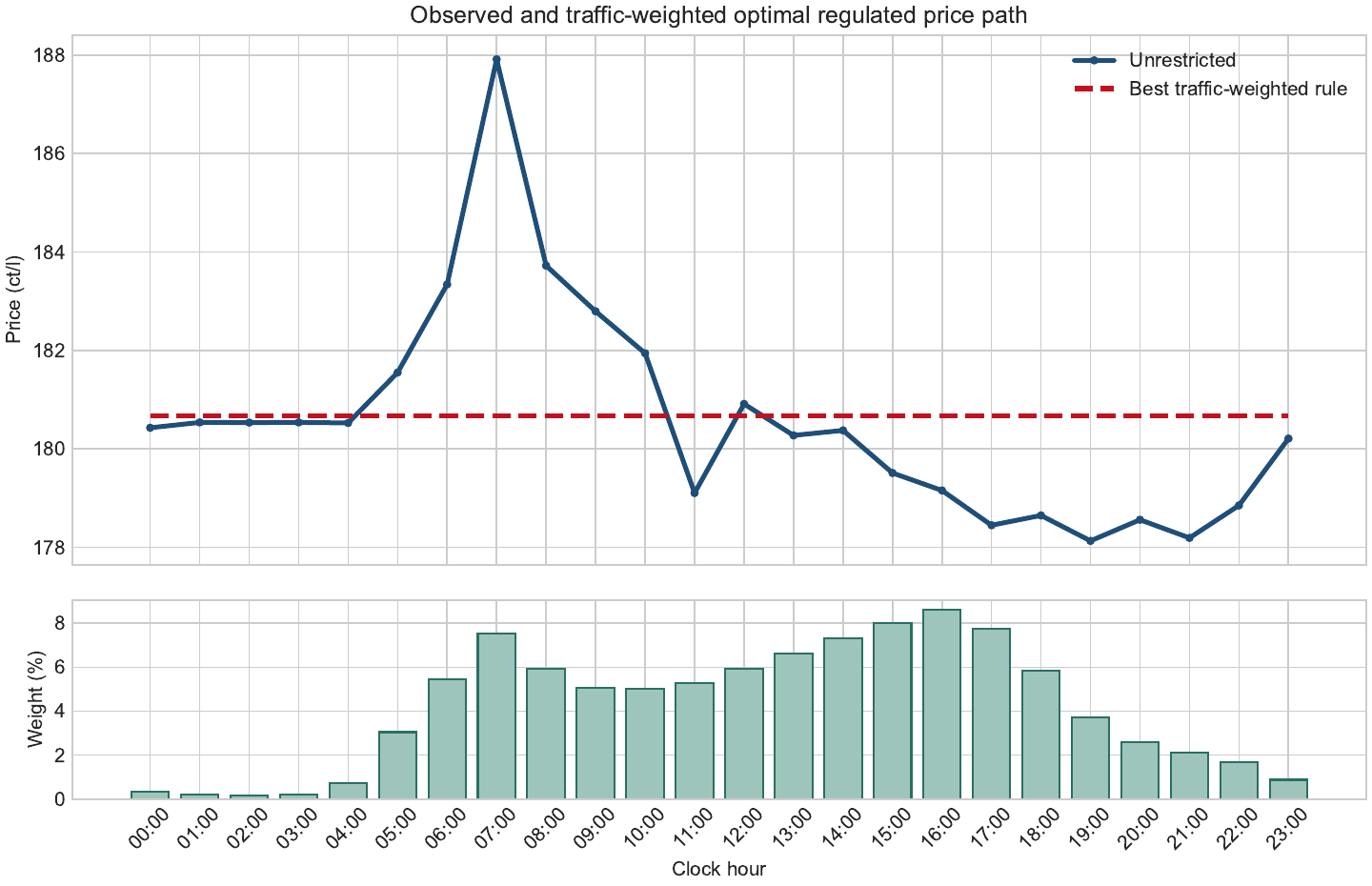}
\caption{Observed and traffic-weighted optimal regulated price path}
\label{fig:paths}
\begin{minipage}{0.92\textwidth}
\footnotesize Notes: The blue line is the pooled Tankerkönig weekday E5 profile for February 2--20, 2026. The dashed red line is the regulated path for the reset-hour that minimizes the traffic-weighted average price. The bars show the hourly traffic weights $\widehat w_t$ from February 2025 weekday passenger-car counts on BASt Bundesstraßen counting stations.
\end{minipage}
\end{figure}

Within the model, the new policy predicts lower traffic-weighted prices throughout the day. But there are winners and losers from this policy change. Price-sensitive consumers that usually bought in the evening would pay more under the new policy, whereas commuters and firms in the morning pay less. And while some might welcome fewer price changes during the day, making prices at the petrol station more predictable, there are also economic arguments in favor of ``peak pricing''. If there is a constant price throughout the day for petrol, as predicted by the model for the best reset-hour under the new policy, then there might be more crowding in peak hours as price-sensitive consumers no longer have an incentive to wait for less busy evenings.

The second best reset-hour according to Table \ref{tab:besthours} at 13:00 has two price blocks. The first block runs from 13:00 until 10:00 the next day at slightly higher prices, which includes the morning commuter rush. And the second block runs from 11:00 to 12:00 at lower prices to benefit from competition for the higher share of price-sensitive consumers around noon. This shows that the choice of the reset-hour affects both price volatility and price levels.

\section{A General Theoretical Result: Flat Prices Are Optimal}

In this setting, the regulator cannot dictate prices; it can only choose the reset-hour, thereby imposing different constraints on petrol stations' price setting behavior.

It is no coincidence that the reset-hour which induces a flat equilibrium price path yields the lowest average price. Both the policy objective (lowest average weighted price) and the petrol station objective (highest weighted average price, given constant cost $c$ during the day) are the same but with opposite signs. So a reset-hour is attractive for the regulator exactly when it is unattractive for the stations. A flat price path is always feasible for petrol stations under the once-per-day reset rule. Therefore, whenever the equilibrium path is non-flat, the equilibrium price changes reveal that stations strictly prefer to use the available within-day flexibility in order to segment hours with different demand composition. That is, they engage in intertemporal price discrimination to keep prices relatively high in blocks with a lower average share of price-sensitive consumers and cut prices only in blocks with a higher average share of price-sensitive consumers.


The following proposition shows that this intertemporal segmentation necessarily raises the weighted average price level relative to a flat path. Formally, each block price is proportional to the inverse of the block-average demand sensitivity, $1/\bar\lambda_m^w(r)$. Since $1/x$ is strictly convex in $x>0$, by Jensen's inequality, splitting the day into several blocks with different average values of $\lambda_t$ increases the weighted average of block prices relative to assigning one common price to the whole day. Put differently, variation in block-level demand sensitivity is valuable to firms because it lets them tailor markups more closely to the composition of demand at different times. But that same flexibility is exactly what the regulator wants to eliminate. Hence, within this model, a reset-hour that produces a flat equilibrium path is optimal among all reset-hours, and any reset-hour that leaves scope for intraday price discrimination must perform strictly worse in terms of weighted average price levels. This result also implies that a secondary regulator objective---reducing price volatility---is perfectly consistent with lower price levels in this model framework.

\begin{proposition}\label{prop:flat_path_dominance}
Suppose the reservation constraint does not bind on any equilibrium block, so Proposition \ref{prop3} yields equilibrium prices
\[
p_m^R(r)=c+\frac{\tau}{\bar\lambda_m^w(r)}
\]
for every reset hour $r$ and every block $K_m(r)$. Define block weights
\[
W_m(r)\equiv \sum_{k\in K_m(r)} w_{t_k(r)},
\qquad
\bar\lambda_m^w(r)\equiv
\frac{\sum_{k\in K_m(r)} w_{t_k(r)}\lambda_{t_k(r)}}{W_m(r)},
\]
and define the day-wide weighted mean of the price-sensitive consumer share by
\[
\Lambda^w \equiv \sum_{t=0}^{23} w_t \lambda_t.
\]
Then, for every reset hour $r$, average equilibrium prices are related to the lower bound of average equilibrium prices via
\[
\bar p^w(r)\equiv \sum_{m=1}^{M(r)} W_m(r)p_m^R(r)
= c+\tau \sum_{m=1}^{M(r)} \frac{W_m(r)}{\bar\lambda_m^w(r)}
\ge c+\frac{\tau}{\Lambda^w}.
\]
Equality holds if and only if the equilibrium price path is flat, or equivalently, if the number of blocks is one $(M(r)=1)$. Hence any reset hour that induces a flat equilibrium price path attains the lowest possible weighted average price, and any reset hour that induces a non-flat equilibrium price path yields a strictly higher weighted average price.
\end{proposition}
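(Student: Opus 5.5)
The plan is a direct Jensen argument on the block partition delivered by Proposition \ref{prop3}.

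\textbf{Step 1: the equality.} Since the blocks $K_1(r),\dots,K_M(r)$ partition the 24 within-cycle hours, and $k\mapsto t_k(r)$ is a bijection onto clock hours, we have $\sum_m W_m(r)=\sum_t w_t=1$. Because $p_k^R(r)=p_m^R(r)$ for $k\in K_m(r)$, summing $\sum_k w_{t_k(r)}p_k^R(r)$ block by block gives $\sum_m W_m(r)p_m^R(r)$. Substituting the block price formula then yields $c\sum_m W_m(r)+\tau\sum_m W_m(r)/\bar\lambda_m^w(r)$, and the first term equals $c$.

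\textbf{Step 2: the mean identity.} The block averages aggregate back to the day-wide mean:
\[
\sum_m W_m(r)\bar\lambda_m^w(r)=\sum_m\sum_{k\in K_m(r)}w_{t_k(r)}\lambda_{t_k(r)}=\Lambda^w .
\]
Thus $\Lambda^w$ is a convex combination of the $\bar\lambda_m^w(r)$ with weights $W_m(r)>0$. All of these values lie in $(0,1]$ because each $\lambda_t\in(0,1]$.

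\textbf{Step 3: Jensen.} Since $x\mapsto 1/x$ is strictly convex on $(0,\infty)$, Jensen's inequality gives
\[
\sum_m W_m(r)/\bar\lambda_m^w(r)\ge 1\big/\sum_m W_m(r)\bar\lambda_m^w(r)=1/\Lambda^w .
\]
Multiplying by $\tau>0$ and adding $c$ gives the stated inequality. Strict convexity gives equality if and only if all $\bar\lambda_m^w(r)$ with positive weight coincide. Every $W_m(r)>0$ because $w_t>0$, so this means all block averages are equal.

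\textbf{Step 4: the equality characterization.} This is the only step needing care. I would invoke the last part of Proposition \ref{prop3}: adjacent blocks have strictly decreasing prices, $p_1^R(r)>\cdots>p_M^R(r)$, which means the $\bar\lambda_m^w(r)$ are strictly increasing in $m$. Hence they are all equal if and only if $M(r)=1$. Also, $M(r)=1$ holds if and only if the path is constant, since distinct blocks carry distinct prices. When $M=1$ the single block average is exactly $\Lambda^w$, so equality holds with value $c+\tau/\Lambda^w$.

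\textbf{Step 5: the "hence" clauses.} The bound $c+\tau/\Lambda^w$ does not depend on $r$. Therefore any reset hour with a flat path attains this common lower bound, and so minimizes $\bar p^w$ over all $r$. Any non-flat reset hour is strictly above the bound, and hence strictly worse.
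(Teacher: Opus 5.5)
Your proposal is correct and follows essentially the same route as the paper: you decompose $\bar p^w(r)$ block by block, show the block averages aggregate to $\Lambda^w$, and apply Jensen's inequality to the strictly convex $1/x$. For the equality case, you use the strict drop between adjacent blocks from Proposition \ref{prop3}, which is the same fact the paper uses when it calls the blocks maximal constant-price blocks, to get equality if and only if $M(r)=1$.
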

\begin{proof}
See Appendix.
\end{proof}

\section{Conclusion}
The timing of a once-per-day price increase rule might matter a lot for prices and consumer welfare. In a two-station model with time-varying demand composition, unrestricted hourly pricing is governed by the share of price-sensitive consumers, $\lambda_t$. If prices have to be constant within a block of several hours, the equilibrium price in that block depends on the \emph{quantity-weighted average} of that share within the block. This turns the policy problem into one of choosing the appropriate reset-hour for the single allowed price increase, each of which leads endogenously to a different equilibrium price path under the newly proposed upward-pricing constraint. By choosing the right reset-hour, the induced equilibrium price path over the day may exhibit lower or higher price levels.

Using weekday Tankerkönig station-level E5 petrol price data and BASt weekday passenger-car traffic weights for Bundesstraßen, I infer that price sensitivity is lowest around 07:00 and highest in the early evening. Evaluating every candidate reset-hour reveals that the best reset-hour under the traffic-weighted objective is 11:00, where equilibrium prices are constant across the day. A theoretical result shows that, in this model framework, inducing constant prices is in fact the best the regulator can do with the single reset-hour regulation.

This analysis might be extended if more precise quantity-by-hour data was available, in order to replace the traffic proxy with actual sales volumes. The underlying model could also be extended by using location data for every petrol station and determining a ``competition index'' for every station. This kind of heterogeneity in the model would make better price predictions for specific petrol stations, by taking into account just how much spatial competition they face given the distance to their competitors, but it is not clear whether it would change the aggregate lessons in this article. For that, the quantity data would seem more important.

\bibliography{petrol_policy_references_010}

\appendix
\section{Proofs}

\begin{proof}[Proof of Proposition \ref{prop1}]
In hour $t$, station $i$ solves
\[
\max_{p_{it}} \ w_t(p_{it}-c)\left[\lambda_t\left(\frac{1}{2}+\frac{p_{jt}-p_{it}}{2\tau}\right)+\frac{1-\lambda_t}{2}\right]
\]
for prices below $v$. The objective is strictly concave because the second derivative is $-w_t\lambda_t/\tau<0$. The first-order condition is
\[
\lambda_t\left(\frac{1}{2}+\frac{p_{jt}-p_{it}}{2\tau}\right)+\frac{1-\lambda_t}{2}-\lambda_t\frac{p_{it}-c}{2\tau}=0.
\]
where the positive factor $w_t$ cancels out. Imposing symmetry, $p_{it}=p_{jt}=p_t$, yields
\[
\frac{1}{2}-\lambda_t\frac{p_t-c}{2\tau}=0,
\]
which gives $p_t=c+\tau/\lambda_t$. 
\end{proof}

\begin{proof}[Proof of Proposition \ref{prop2}]
For a common price on block $K$, station $i$ solves
\[
\max_{p_i}\ (p_i-c)\sum_{k\in K}w_{t_k(r)}\left[\frac{1}{2}+\lambda_{t_k(r)}\frac{p_j-p_i}{2\tau}\right].
\]
The objective is strictly concave. The first-order condition is
\[
\frac{1}{2}\sum_{k\in K}w_{t_k(r)}+\frac{p_j-p_i}{2\tau}\sum_{k\in K}w_{t_k(r)}\lambda_{t_k(r)}-\frac{p_i-c}{2\tau}\sum_{k\in K}w_{t_k(r)}\lambda_{t_k(r)}=0.
\]
Imposing symmetry gives
\[
\frac{1}{2}\sum_{k\in K}w_{t_k(r)}-\frac{p_K-c}{2\tau}\sum_{k\in K}w_{t_k(r)}\lambda_{t_k(r)}=0,
\]
which implies
\[
p_K=c+\tau\frac{\sum_{k\in K}w_{t_k(r)}}{\sum_{k\in K}w_{t_k(r)}\lambda_{t_k(r)}}=c+\frac{\tau}{\bar \lambda_K^w}.\qedhere
\]
\end{proof}

\begin{proof}[Proof of Proposition \ref{prop3}]
Write $\lambda_k\equiv \lambda_{t_k(r)}$ and $w_k\equiv w_{t_k(r)}$, and first ignore the reservation cap $v$. For a given symmetric candidate path $p=(p_0,\dots,p_{23})$, station $i$'s payoff is
\[
\sum_{k=0}^{23} w_k (p_{ik}-c)\left[\frac{1}{2}+\lambda_k\frac{p_k-p_{ik}}{2\tau}\right],
\]
subject to the monotonicity constraints $p_{ik}\le p_{i,k-1}$ for $k=1,\dots,23$. 
For fixed rival $\{p_k\}$, the objective is strictly concave in $\mathbf{p}_i$, because the Hessian is diagonal with entries $-w_k\lambda_k/\tau<0$, and the feasible set is convex. Hence the best response is characterized by the KKT conditions. At a symmetric candidate path, $p_{ik}=p_k$ for all $k$. After imposing symmetry, multiplying the stationarity conditions of station $i$'s problem by the positive constant $2$ and leaving feasibility and complementary slackness unchanged, one obtains exactly the KKT system of this single optimization problem,
\begin{equation}
\max_{\mathbf p\in \mathbb R_+^{24}} \sum_{k=0}^{23}\left[w_k(p_k-c)-\frac{w_k\lambda_k}{2\tau}(p_k-c)^2\right]
\quad\text{s.t.}\quad p_k\le p_{k-1}\ \text{for}\ k=1,\dots,23.
\label{eq:alt_prop3_program}
\end{equation}
The objective is strictly concave and the feasible set is convex, so \eqref{eq:alt_prop3_program} has a unique maximizer. Now let
\[
p_k^F(r)=c+\frac{\tau}{\lambda_k}
\]
denote the unconstrained hourly optimum from Proposition \ref{prop1}. Completing the square in \eqref{eq:alt_prop3_program} gives
\begin{equation*}
\begin{split}
w_k(p_k-c)-\frac{w_k\lambda_k}{2\tau}(p_k-c)^2
=-\frac{w_k\lambda_k}{2\tau}\left[(p_k-c)^2-\frac{2\tau}{\lambda_k}(p_k-c)\right]\\
=-\frac{w_k\lambda_k}{2\tau}\left[\left(p_k-c-\frac{\tau}{\lambda_k}\right)^2-\left(\frac{\tau}{\lambda_k}\right)^2\right]
=\frac{w_k\lambda_k}{2\tau}\left(\frac{\tau}{\lambda_k}\right)^2-\frac{w_k\lambda_k}{2\tau}\bigl(p_k-p_k^F(r)\bigr)^2.
\end{split}
\end{equation*}
Hence \eqref{eq:alt_prop3_program} is equivalent to
\begin{equation}
\min_{\mathbf p} \sum_{k=0}^{23}\frac{w_k\lambda_k}{2\tau}\bigl(p_k-p_k^F(r)\bigr)^2
\quad\text{s.t.}\quad p_0\ge p_1\ge \cdots \ge p_{23}.
\label{eq:alt_prop3_isotonic}
\end{equation}
So the constrained equilibrium path is the unique weakly decreasing path that lies closest to the unconstrained price path in weighted quadratic loss. Equation \eqref{eq:alt_prop3_isotonic} is a weighted isotonic regression problem with fitted values $p_k$, data $p_k^F(r)$, and weights
\[
a_k\equiv \frac{w_k\lambda_k}{2\tau}>0.
\]
By the standard weighted pool-adjacent-violators theorem, its unique solution is obtained by starting from the singleton blocks $\{0\},\dots,\{23\}$ and repeatedly merging adjacent blocks whenever their fitted values would violate monotonicity, replacing them by their weighted average; see, for example, 
\citet{BarlowBartholomewBremnerBrunk1972}. Therefore the terminal partition produced by that iterative pooling procedure is the unique equilibrium partition, and the corresponding fitted values form the unique constrained equilibrium path.

For any pooled block $K$, the fitted value is the weighted average
\[
p_K^R
=
\frac{\sum_{k\in K} a_k p_k^F(r)}{\sum_{k\in K} a_k}.
\]
Substituting $a_k=w_k\lambda_k/(2\tau)$ and $p_k^F(r)=c+\tau/\lambda_k$ gives
\[
p_K^R
=
\frac{\sum_{k\in K}\frac{w_k\lambda_k}{2\tau}\left(c+\frac{\tau}{\lambda_k}\right)}
{\sum_{k\in K}\frac{w_k\lambda_k}{2\tau}}
=
c+\tau\frac{\sum_{k\in K}w_k}{\sum_{k\in K}w_k\lambda_k}
=
c+\frac{\tau}{\bar\lambda_K^w}.
\]
Hence the unique equilibrium path is exactly the path described in the proposition. If the unrestricted hourly prices already satisfy
\[
p_0^F(r)\ge p_1^F(r)\ge \cdots \ge p_{23}^F(r),
\]
then no pooling occurs and every block is a singleton.
\qedhere 

\end{proof}

\begin{proof}[Proof of Proposition \ref{prop:flat_path_dominance}]
Fix a reset hour $r$, and let $\mathcal K(r)=\{K_1(r),\dots,K_{M(r)}(r)\}$ denote the maximal constant-price blocks from Proposition \ref{prop3}. Because the path is constant on each block,
\[
\bar p^w(r)
=\sum_{m=1}^{M(r)} \sum_{k\in K_m(r)} w_{t_k(r)} p_m^R(r)
=\sum_{m=1}^{M(r)} W_m(r)p_m^R(r).
\]
Using the interior block-pricing formula,
\[
\bar p^w(r)
=\sum_{m=1}^{M(r)} W_m(r)\left(c+\frac{\tau}{\bar\lambda_m^w(r)}\right)
= c\sum_{m=1}^{M(r)} W_m(r)+\tau \sum_{m=1}^{M(r)} \frac{W_m(r)}{\bar\lambda_m^w(r)}.
\]
Since the blocks partition the day, $\sum_{m=1}^{M(r)} W_m(r)=\sum_{t=0}^{23} w_t=1$, so
\[
\bar p^w(r)=c+\tau \sum_{m=1}^{M(r)} \frac{W_m(r)}{\bar\lambda_m^w(r)}.
\]

Next,
\[
\sum_{m=1}^{M(r)} W_m(r)\bar\lambda_m^w(r)
=\sum_{m=1}^{M(r)} \sum_{k\in K_m(r)} w_{t_k(r)}\lambda_{t_k(r)}
=\sum_{k=0}^{23} w_{t_k(r)}\lambda_{t_k(r)}
=\sum_{t=0}^{23} w_t\lambda_t
=\Lambda^w.
\]
Now apply Jensen's inequality to the strictly convex function $\phi(x)=1/x$ on $(0,\infty)$:
\[
\sum_{m=1}^{M(r)} W_m(r)\phi\!\left(\bar\lambda_m^w(r)\right)
\ge
\phi\!\left(\sum_{m=1}^{M(r)} W_m(r)\bar\lambda_m^w(r)\right)
=\frac{1}{\Lambda^w}.
\]
Multiplying both sides by $\tau$ and adding $c$ gives
\[
\bar p^w(r)\ge c+\frac{\tau}{\Lambda^w}.
\]
Because $\phi$ is strictly convex, equality in Jensen's inequality holds if and only if
\[
\bar\lambda_1^w(r)=\cdots=\bar\lambda_{M(r)}^w(r)=\Lambda^w.
\]
Under the interior pricing formula this is equivalent to
\[
p_1^R(r)=\cdots=p_{M(r)}^R(r),
\]
so the constrained equilibrium price path must be constant across all hours. Since $\mathcal K(r)$ is defined as the collection of maximal consecutive constant-price blocks, this happens if and only if $M(r)=1$. Therefore every flat equilibrium path attains the lower bound of the average price $c+\tau/\Lambda^w$, while every non-flat equilibrium path yields a strictly higher weighted average price.
\end{proof}
\end{document}